\newtheorem{theorem}{Theorem}
\newtheorem{lemma}{Lemma}
\newtheorem{corollary}{Corollary}
\newtheorem{property}{Property}
\newcommand{\SA}{\includegraphics{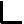}~}
\newcommand{\SB}{\includegraphics{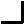}~}
\newcommand{\SC}{\includegraphics{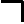}~}
\newcommand{\SD}{\includegraphics{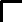}~}
\newcommand{\shapes}{$\{\SA,\SB,\SC,\SD\}$\xspace}
\begin{document}

\title{L-Visibility Drawings of IC-planar Graphs
\thanks{Research supported in part by the MIUR project AMANDA ``Algorithmics for MAssive and Networked DAta''.}}
    
\author{Giuseppe Liotta and Fabrizio Montecchiani\\Universit{\`a} degli Studi di Perugia, Italy\\ \texttt{\small\{giuseppe.liotta,fabrizio.montecchiani\}@unipg.it}} 

\date{}

\maketitle

\begin{abstract}
An IC-plane graph is a topological graph where every edge is crossed at most once and no two crossed edges share a vertex. We show that every IC-plane graph has a visibility drawing where every vertex is of the form \shapes, and every edge is either a horizontal or vertical segment. As a byproduct of our drawing technique, we prove that every IC-plane graph has a RAC drawing in quadratic area with at most two bends per edge.
\end{abstract}


\section{Introduction}

A \emph{visibility drawing} $\Gamma$ of a planar graph $G$ maps the vertices of $G$ into non-overlapping horizontal segments (\emph{bars}), and the edges of $G$ into  vertical segments (\emph{visibilities}), each connecting the two bars corresponding to its two end-vertices. Visibilities intersect bars only at their extreme points. $\Gamma$ is a \emph{strong} visibility drawing if there exists a visibility between two bars if and only if there exists an edge in $G$ between the corresponding vertices. Every biconnected planar graph admits a strong visibility drawing (see, e.g.,~\cite{TamassiaTollis86}). Conversely, if a visibility may not correspond to an edge of the graph, then $\Gamma$ is a {\em weak} visibility drawing. Since every planar graph can be augmented to a biconnected planar graph by adding edges, every planar graph admits a weak visibility drawing. 

The problem of extending visibility drawings to non-planar graphs has been first studied by Dean {\em et al.}~\cite{DBLP:journals/jgaa/DeanEGLST07}. They introduce \emph{bar $k$-visibility drawings}, which are visibility drawings where each bar can see through at most $k$ distinct bars. In other words, each visibility segment can intersect at most $k$ bars, while each bar can be intersected by arbitrary many visibility segments. The graphs that admit a bar $1$-visibility drawing are called \emph{$1$-visibile}. Brandenburg {\em et al.} and independently Evans {\em et al.} prove that \emph{$1$-planar graphs}, i.e., those graphs that can be drawn with at most one crossing per edge, are $1$-visible~\cite{DBLP:journals/jgaa/Brandenburg14,DBLP:journals/jgaa/Evans0LMW14}. They focus on a \emph{weak} model, where there is a visibility through at most $k$ bars if there is an edge, while the converse may not be true. In fact, having a strong model would be too restrictive in the case of bar $k$-visibility drawings. For example, it is easy to see that a cycle of length at least four does not admit a strong bar $1$-visibility drawing~\cite{DBLP:journals/jgaa/Brandenburg14}. In terms of readability, a clear benefit of bar $k$-visibility drawings is that the crossings form right angles. \emph{Right-angle crossing (RAC) drawings} and their advantages in terms of readability have been extensively studied in the graph drawing literature (see, e.g.,~\cite{dl-cargd-12,DBLP:conf/apvis/HuangHE08}). However, in a bar $k$-visibility drawing crossings involve bars and visibilities, i.e., vertices and edges. These crossings are arguably less intuitive than crossings between edges.

Evans {\em et al.} introduce a new model of visibility drawings, called {\em L-visibility drawings}~\cite{elm-svrp+-15}. Their aim is to simultaneously represent two plane $st$-graphs $G_r$ and $G_b$ (whose union might be non-planar). They assume a {\em strong} model, where each vertex is represented by a horizontal bar and a vertical bar that share an extreme point, i.e. it is an {\em L-shape} in the set \shapes.  Two L-shapes are connected by a vertical (horizontal) visibility segment if and only if there exists an edge in $G_r$ ($G_b$) between the corresponding vertices,  no two L-shapes cross one another, and visibilities intersect bars only at their extreme points. A clear advantage of this kind of drawing is that the only possible crossings are between vertical and horizontal visibilites, i.e., between edges of the graph. Furthermore, similar to bar $k$-visibilities, these crossings form right angles. 

\begin{figure}[t]
\centering
\subfigure[]{\includegraphics[scale=0.45,page=1]{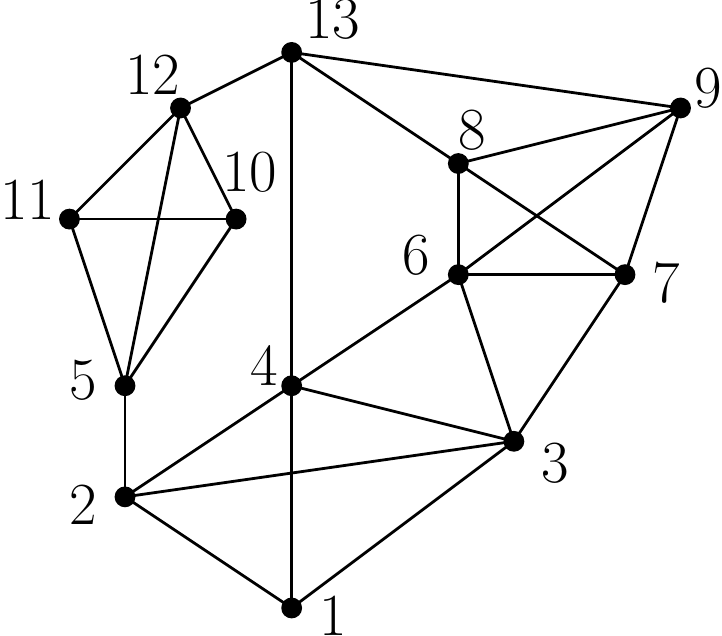}\label{fi:example-1}}\hfil
\subfigure[]{\includegraphics[scale=0.45,page=2]{figures/example}\label{fi:example-2}}\hfil
\subfigure[]{\includegraphics[scale=0.45,page=3]{figures/example}\label{fi:example-3}}
\caption{\small (a) An IC-plane graph $G$. (b) A L-visibility drawing of $G$. (c) A RAC drawing of $G$ with at most two bends per edge.}
\end{figure}

In this paper we initiate the study of {\em weak} L-visibility drawings of non-planar graphs.  We focus on the class of graphs that admit a drawing where each edge is crossed at most once, and no two crossed edges share an end-vertex. These graphs are called \emph{IC-planar graphs} (see  Fig.~\ref{fi:example-1} for an example). Their chromatic number is at most five~\cite{ks-cpgic-JGT10}, and they have at most~$13n/4-6$ edges, which is a tight bound~\cite{zl-spgic-CEJM13}. Recognizing IC-planar graphs is NP-hard~\cite{bdek+-rdicg-15}. Our main contribution is summarized by the following theorem, proved in Section~\ref{se:proof}. See Fig.~\ref{fi:example-2} for an example of a drawing computed by using Theorem~\ref{th:main}.

\begin{theorem}\label{th:main}
Every $n$-vertex IC-plane graph $G$ admits a L-visibility drawing in $O(n^2)$ area, which can be computed in $O(n)$ time.
\end{theorem}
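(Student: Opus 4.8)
The plan is to turn every crossing of $G$ into a right-angle crossing between one \emph{vertical} visibility and one \emph{horizontal} visibility, by drawing almost the whole graph as a classical bar visibility representation and then inserting the remaining edges locally. The starting observation is that the two edges of a crossing of an IC-plane graph are vertex-disjoint, so the $2k$ crossed edges form a matching and the $k$ crossings can be dealt with independently. First I would add uncrossed edges to $G$ until it becomes a maximal IC-plane graph $G^{+}$; this takes $O(n)$ time and is harmless, since removing the added edges from an L-visibility drawing of $G^{+}$ yields one of $G$. In a maximal IC-plane graph each crossing $\chi$, say between $e_\chi=(a_\chi,c_\chi)$ and $f_\chi=(b_\chi,d_\chi)$, is enclosed in a \emph{kite}: the $4$-cycle $(a_\chi,b_\chi,c_\chi,d_\chi)$ occurs with its four edges uncrossed, and the open region it bounds contains only parts of $e_\chi$ and $f_\chi$ (if some kite edge is missing we may add it, which keeps the graph IC-plane). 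Since the crossed edges form a matching, the $k$ kites have pairwise disjoint open interiors.

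\textbf{The vertical part.} Let $S$ be obtained from $G^{+}$ by deleting all crossed edges, so that each kite becomes a quadrangular face and $S$ stays connected, and put $S^{+}:=S\cup\{\,e_\chi : \chi\,\}$, which is plane. Using the linear-time algorithm of Tamassia and Tollis~\cite{TamassiaTollis86} (first augmenting $S^{+}$ to a plane $st$-graph) I obtain a weak bar visibility representation $\Gamma_0$ of $S^{+}$ on an $O(n)\times O(n)$ grid: vertices are horizontal bars and the edges of $S^{+}$ are vertical visibilities. We may space out the rows so that every crossing gets its own free horizontal ``corridor''. After this step all uncrossed edges of $G$ and exactly one edge of every crossing pair are drawn; what is still missing are the edges $f_\chi$ and the vertical bars that turn the horizontal bars into the L-shapes \shapes.

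\textbf{Inserting the horizontal crossing edges.} Every vertex not incident to any $f_\chi$ is completed to an L-shape by attaching a negligibly short vertical bar to one end of its horizontal bar. Now fix a crossing $\chi$. In $\Gamma_0$ the segment drawing $e_\chi$ is vertical, spans a whole $y$-interval $I_\chi$ strictly between the rows of $a_\chi$ and $c_\chi$, and splits the otherwise empty kite region into a ``$b_\chi$-side'' and a ``$d_\chi$-side''. I place the corner of the L-shape of $b_\chi$ at the end of its horizontal bar facing $e_\chi$, extend its vertical bar (upward or downward, whichever stays inside the kite) until it reaches some height $h_\chi\in I_\chi$ chosen inside $\chi$'s private corridor, and I do the mirror construction for $d_\chi$; then $f_\chi$ is drawn as the horizontal segment at height $h_\chi$ joining the two vertical bars. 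Because everything involved stays inside the kite region, the only segment $f_\chi$ meets is $e_\chi$, which it crosses once and at a right angle; and because $h_\chi$ lies in a corridor used by no other crossing and the new vertical bars occupy only the grid columns of $b_\chi$ and $d_\chi$, distinct kites do not interfere. Each vertex ends up drawn as a shape in \shapes and each edge of $G^{+}$ as a horizontal or vertical visibility, so deleting the augmentation edges gives the claimed drawing. The grid stays $O(n)\times O(n)$ — $\Gamma_0$ is, and each crossing contributes only $O(1)$ new coordinates — and every step runs in $O(n)$ time.

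\textbf{Main obstacle.} The delicate point is the last step: one must verify that the two short vertical bars and the horizontal segment $f_\chi$ really fit inside the kite without touching the vertical segments drawing $(a_\chi,b_\chi)$, $(b_\chi,c_\chi)$, $(a_\chi,d_\chi)$, $(c_\chi,d_\chi)$, nor the horizontal bars of $a_\chi,b_\chi,c_\chi,d_\chi$. This forces a specific choice among the four orientations of \shapes for $b_\chi$ and $d_\chi$ (depending on which side of $e_\chi$ they lie on and on the up/down placement of their rows), and it may require an $O(1)$-size local stretch of $\Gamma_0$ around each kite so that the corner of $b_\chi$ (resp.\ $d_\chi$) sits at a grid column not already used by an incident kite edge. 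Showing that all these local adjustments can be carried out simultaneously, still within an $O(n)\times O(n)$ grid and in linear time, is the crux of the argument; the RAC drawing with at most two bends per edge then follows by replacing each L-shape with a single point and bending the incident edges at the former corner.
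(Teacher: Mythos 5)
There is a genuine gap, and it is exactly the point you flagged as the ``main obstacle'' but then tried to dismiss with corridors and local stretches. Your construction silently assumes that in the bar visibility drawing $\Gamma_0$ of $S^{+}$ the segment for $e_\chi=(a_\chi,c_\chi)$ is the only thing separating the bars of $b_\chi$ and $d_\chi$, i.e.\ that $b_\chi$ and $d_\chi$ end up on opposite sides of a vertical visibility, both strictly between the rows of $a_\chi$ and $c_\chi$. This is not guaranteed by an arbitrary $st$-orientation of $S^{+}$: the geometry of a Tamassia--Tollis drawing is dictated by the bipolar orientation, and if the origin and destination of the kite face happen to be two \emph{adjacent} corners (say $a_\chi$ and $b_\chi$, with orientation $a_\chi\!\to\! b_\chi$, $a_\chi\!\to\! d_\chi\!\to\! c_\chi\!\to\! b_\chi$), then neither crossing edge joins the origin to the destination, the bars of the remaining two corners are stacked vertically with the bar of a third corner between them, and no horizontal segment can join them while crossing only $e_\chi$. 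This obstruction is combinatorial, not metric: no amount of row spacing, private corridors, or $O(1)$ local stretching changes the vertical stacking order of the four bars, so the ``fit inside the kite'' step can genuinely fail. Note also that you fix which edge of the pair goes into $S^{+}$ \emph{before} the orientation is chosen, so you cannot even repair this by picking the other edge afterwards; in the adjacent-origin/destination case neither choice works.

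What is missing is precisely the content of the paper's Lemma~\ref{le:st-orientation}: one must \emph{construct} the $st$-orientation so that every kite face has its origin and destination at opposite corners (equivalently, its left and right paths contain exactly one vertex each). The paper achieves this by contracting each kite face to a single vertex, orienting the contracted plane multigraph as an $st$-graph, and then expanding each contracted vertex while orienting the four face edges through a case analysis that preserves a single source, a single sink, and acyclicity. Only after this step is the edge incident to the origin and destination reinserted and the strong visibility drawing computed; then the two side vertices of each kite are guaranteed to face each other across the vertical visibility of that edge, and the L-shape completion you describe (short vertical bars plus one horizontal visibility crossing $e_\chi$ at a right angle) goes through essentially as in the paper's Lemma~\ref{le:drawing}. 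Without an argument of this kind your proof is incomplete at its central step; with it, the rest of your outline (augmentation to empty kites, weak model, $O(n^2)$ area, $O(n)$ time, and the two-bend RAC corollary) matches the paper's approach.
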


We remark that Theorem~\ref{th:main} contributes to the rapidly growing literature devoted to the problem of drawing graphs that are ``nearly planar'' in some sense, i.e. graphs where only some types of edge crossings are allowed (for example, an edge can be crossed at most a constant number of times); see e.g.,~\cite{DBLP:conf/ictcs/Liotta14} for references.  
In particular,  Brandenburg {\em et al.} have recently described a cubic-time algorithm that computes IC-planar drawings with right-angle crossings and straight-line edges~\cite{bdek+-rdicg-15}. However these drawings may require exponential area, which is proved to be worst-case optimal~\cite{bdek+-rdicg-15}.  Brandenburg {\em et al.} leave as an open problem to study techniques that compute IC-planar drawings in polynomial area and with good crossing resolution~\cite{bdek+-rdicg-15}. We also recall that every graph admits a RAC drawing with at most three bends per edge~\cite{DBLP:journals/tcs/DidimoEL11}, while testing whether a graph has a straight-line RAC drawing is NP-hard~\cite{DBLP:journals/jgaa/ArgyriouBS12}. The following corollary follows as a byproduct of Theorem~\ref{th:main} (see also Fig.~\ref{fi:example-3}).

\begin{corollary}\label{co:main}
Every $n$-vertex IC-plane graph $G$ admits a RAC drawing with at most two bends per edge in $O(n^2)$ area, which can be computed in $O(n)$ time.
\end{corollary}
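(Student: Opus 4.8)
The plan is to derive the RAC drawing directly from the L-visibility drawing produced by Theorem~\ref{th:main}, by a local transformation that replaces each L-shaped vertex and its incident visibility segments with a constant-complexity gadget. Given the L-visibility drawing $\Gamma$ of $G$, each vertex $v$ is drawn as an L-shape consisting of a horizontal bar and a vertical bar sharing a common endpoint (the \emph{corner} of the L); every edge of $G$ is a vertical or horizontal segment attaching to one of these two bars. The first step is to pick, for each vertex $v$, a point $p_v$ in a small neighborhood of the corner of its L-shape; this point will be the single geometric location of $v$ in the RAC drawing. I would then route each edge $uv$ of $G$ as follows: start at $p_u$, follow (a slightly shifted copy of) the bar of $u$ that the visibility segment of $uv$ was attached to until reaching the $x$- or $y$-coordinate of the original visibility segment, turn (first bend) to follow that visibility segment, and finally turn again near the other endpoint (second bend) to reach $p_v$ along the appropriate bar of $v$. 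This gives at most two bends per edge, and all segments of the new drawing are axis-parallel, so every crossing is between a horizontal piece and a vertical piece and hence is a right angle — exactly the RAC property. The area bound is preserved up to a constant blow-up of the grid, so it remains $O(n^2)$, and the whole transformation is a single pass over vertices and edges, giving $O(n)$ time on top of Theorem~\ref{th:main}.

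The steps, in order, are: (i) invoke Theorem~\ref{th:main} to obtain $\Gamma$ on an $O(n)\times O(n)$ grid; (ii) refine the grid by a constant factor so that, around each L-corner, there is room to place $p_v$ and to run a bounded number of parallel ``tracks'' along each bar; (iii) for each vertex assign disjoint track offsets to the edges incident to its horizontal bar and to its vertical bar, ordered consistently with the order in which their visibility segments meet the bar, so that the rerouted edge stubs at $v$ do not overlap one another; (iv) reroute every edge as the three-segment (two-bend) path described above; (v) verify that the only crossings introduced are between an (almost-)horizontal stub/visibility and an (almost-)vertical stub/visibility, which meet orthogonally, and that no edge passes through a vertex point $p_v$.

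The main obstacle I anticipate is step (iii) together with the crossing check in step (v): when an edge's visibility segment is, say, vertical and attaches to the horizontal bar of $u$, the rerouted edge must travel \emph{along} $u$'s horizontal bar before turning onto the visibility, and I must ensure this horizontal stub does not spuriously cross the stubs of other edges incident to $u$, nor the L-shape or stubs of a neighboring vertex whose bar overlaps the same horizontal strip. This is handled by the track-assignment bookkeeping — edges leaving a bar are nested according to their attachment order along that bar, so their stubs form a non-crossing ``staircase'' bundle — but making the case analysis airtight (horizontal bar vs.\ vertical bar attachment, and the interaction at the shared corner of the L) is where the real work lies; the right-angle-crossing and area claims then follow immediately since the construction keeps every segment axis-parallel.
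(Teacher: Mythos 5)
Your overall strategy is the same as the paper's: take the L-visibility drawing $\Gamma$ from Theorem~\ref{th:main}, collapse each L-shape to a single vertex point, reroute each edge as a two-bend polyline that reuses the original visibility segment as its middle piece, and observe that the crossings (which already occur between a horizontal and a vertical visibility) survive as right angles, with only a constant-factor grid refinement and a single linear-time pass. However, your specific routing has a genuine gap, and it sits exactly where you locate ``the real work.'' You require simultaneously that (a) every edge incident to $v$ emanates from the single point $p_v$, (b) every segment of the rerouted drawing is axis-parallel (this is the only argument you give for the right-angle property), and (c) each edge has at most two bends. These three requirements are incompatible: distinct edges attached to the same bar of $v$ must run along distinct parallel tracks (as your step (iii) correctly demands, since otherwise their stubs overlap), but an axis-parallel track segment at a nonzero offset from $p_v$ cannot contain $p_v$, so joining it to $p_v$ costs either a third bend or an oblique initial segment --- and in the latter case the blanket claim ``all segments are axis-parallel, hence all crossings are right angles'' no longer applies. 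On top of this, the non-crossing of the stub bundles, in particular the interaction between the stubs along the horizontal bar and those along the vertical bar at the shared corner of the L, is only asserted (``staircase bundle'') and never proved; as stated, a stub of one edge at $v$ can cross the vertical portion or the stub of another edge at $v$ unless the track order and the corner case analysis are worked out, which your proposal explicitly leaves open. You also do not treat degenerate L-shapes in which one of the two bars has zero length.

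The paper avoids all of this bookkeeping (see the argument following Lemma~\ref{le:drawing}): each L-shape is replaced by a \emph{representative point} (the corner, or the midpoint of the unique nonzero bar), and a visibility segment $s$ with attachment points $p_1,p_2$ is replaced by the polyline $r_1 \rightarrow$ (point $0.25$ units beyond $p_1$ along $s$) $\rightarrow$ (point $0.25$ units before $p_2$) $\rightarrow r_2$. The middle piece is the trimmed original visibility, so every crossing of $\Gamma$ is inherited on two axis-parallel pieces and stays a right angle; the two short end pieces are allowed to be oblique, and they cause no harm because they stay within a thin strip along the bar below all bend points of the other edges attached to that bar, and any two of them at the same vertex share only the representative point. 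This yields exactly two bends per edge with no track assignment at all. If you want to rescue your axis-parallel version, you would either have to accept three bends per edge or replace the axis-parallel stubs by oblique ones and then supply the paper's style of argument that the oblique pieces never participate in any crossing.
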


\section{Preliminaries}\label{se:preliminaries}
We assume familiarity with basic graph drawing concepts, see also~\cite{dett-gd-99}. 

{\bf Planarity and connectivity.} A graph $G=(V,E)$ is \emph{simple}, if it contains neither loops nor multiple edges. We consider simple graphs, if not otherwise specified. A \emph{drawing} $\Gamma$ of $G$ maps each vertex of $V$ to a point of the plane and each edge of $E$ to a Jordan arc between its two end-points. We only consider \emph{simple drawings}, i.e., drawings such that the arcs representing two edges have at most one point in common, which is either a common end-vertex or a common interior point where the two arcs properly cross. A drawing is \emph{planar} if no two arcs representing two edges cross. A planar drawing divides the plane into topologically connected regions, called \emph{faces}. The unbounded region is called the \emph{outer face}. A \emph{planar embedding} of a graph is an equivalence class of planar drawings that define the same set of faces. A graph with a given planar embedding is a \emph{plane} graph. For a non-planar drawing, we can still talk about embedding considering that the boundary of a face may consist of portions of arcs between vertices and/or crossing points. 

A graph is \emph{biconnected} if it remains connected after removing any one vertex. A directed graph (a digraph for short) is biconnected if its underlying undirected graph is biconnected. A \emph{topological numbering} of a digraph is an assignment, $X$, of numbers to its vertices such that $X(u) < X(v)$ for every edge $(u,v)$. A graph admits a topological numbering if and only if it is acyclic. An acyclic digraph with a single source $s$ and a single sink $t$ is called an \emph{$st$-graph}. A \emph{plane $st$-graph} is an $st$-graph that is planar and embedded such that $s$ and $t$ are on the boundary of the outer face. In any $st$-graph, the presence of the edge $(s,t)$ guarantees that the graph is biconnected. In the following we consider $st$-graphs that contain the edge $(s,t)$, as otherwise it can be added without violating planarity. Let $G$ be a plane $st$-graph, then for each vertex $v$ of $G$ the incoming edges appear consecutively around $v$, and so do the outgoing edges. Vertex $s$ only has outgoing edges, while vertex $t$ only has incoming edges. This particular transversal structure is known as a \emph{bipolar orientation}~\cite{DBLP:journals/dcg/RosenstiehlT86,TamassiaTollis86}. Each face $f$ of $G$ is bounded by two directed paths with a common \emph{origin} and \emph{destination}, called the \emph{left path} and \emph{right path} of $f$. 

\begin{figure}[t]
    \centering
    \subfigure[]{\includegraphics[scale=0.8,page=1]{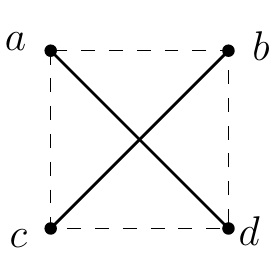}\label{fi:thomassen-1}}
    \hfil
    \subfigure[]{\includegraphics[scale=0.8,page=2]{figures/thomassen}\label{fi:thomassen-2}}
  \caption{\small (a) An X-configuration and (b) a B-configuration.}
\end{figure}

{\bf IC-planar graphs.} We recall some definitions also given in~\cite{bdek+-rdicg-15}. A drawing is IC-planar if each edge is crossed at most once, and any two crossed edges do not share an end-vertex. See Fig.~\ref{fi:example-1} for an illustration. An \emph{IC-planar embedding} is an embedding derived from an IC-planar drawing. A graph with a given IC-planar embedding is an \emph{IC-plane} graph. Thomassen~\cite{t-rdg-JGT88} characterized the possible crossing configurations that occur in a 1-planar drawing, i.e., a drawing where each edge is crossed at most once. This characterization applied to IC-planar drawings gives rise to the following property, where an X-crossing is of the  type described in Fig.~\ref{fi:thomassen-1}, and a B-crossing is of the type described in Fig.~\ref{fi:thomassen-2} (the solid edges only). 

\begin{property}[\cite{bdek+-rdicg-15}]\label{pr:char-crossins}
  Every crossing of an IC-plane graph is either an X- or a B-crossing.
\end{property}
A \emph{kite}~$K$ is a graph isomorphic to~$K_4$ together with an embedding such that all the vertices are on the boundary of the outer face. This implies that two edges of~$K$ cross each other, while the other four edges are not crossed and  belong to the boundary of the outer face; see Fig.~\ref{fi:thomassen-1}. Consider a pair of crossing edges of an IC-plane graph $G$, such that their four end-vertices induce a kite $K$. The kite $K$ is \emph{empty}, if in $G$ there is no other vertex inside the internal faces of $K$. The following property is a consequence of the more general Lemma 1 in~\cite{bdek+-rdicg-15} (in particular of cases c1 and c2 of that lemma). 

\begin{property}[\cite{bdek+-rdicg-15}]\label{pr:augmentation}
Let $G=(V,E)$ be an $n$-vertex IC-plane graph. It is possible to augment $G$ to a biconnected IC-plane graph $G^+=(V,E^+)$ (with a possibly different embedding), where $E \subseteq E^+$, such that the end-vertices of each pair of crossing edges of $G^+$ induce an empty kite. This can be done in $O(n)$ time.
\end{property}

{\bf Visibility model.}  In a L-visibility drawing $\Gamma$ of a graph $G$, every vertex is represented by a horizontal and a vertical segment sharing an end-point, i.e., by an L-shape in the set \shapes. Each edge of $G$ is drawn in $\Gamma$ as either a horizontal or a vertical visibility segment joining the two L-shapes corresponding to its two end-vertices. Clearly, horizontal visibilities only cross vertical visibilities at right angles. Also, no two L-shapes intersect. If $G$ is an IC-plane graph, then each visibility is crossed at most once and no two crossed visibilities are incident to the same L-shape. In Fig.~\ref{fi:example-2}, an L-visibility representation $\Gamma$ of an IC-plane graph $G$ is shown.  Finally, we adopt a weak model, where a visibility may not imply the existence of the corresponding edge in the graph. For example, in Fig.~\ref{fi:example-2} the L-shapes of vertices $8$ and $10$ can be joined by a horizontal visibility, although the edge $(8,10)$ does not exist in $G$.

\section{Proof of Theorem~\ref{th:main}}\label{se:proof}

The proof of Theorem~\ref{th:main} is constructive and is based on a drawing algorithm that takes as input an IC-plane graph $G$ and returns a L-visibility drawing $\Gamma$ of $G$. By Property~\ref{pr:augmentation}, we assume that $G$ is such that each crossing induces an empty kite (see Section~\ref{se:preliminaries}). In fact, the output of our drawing algorithm maintains the IC-planar embedding obtained by applying Property~\ref{pr:augmentation}. We begin by removing from $G$ all pairs of crossing edges and orient the resulting graph $P$ to an $st$-graph. The computed orientation is such that, when reinserting a pair of crossing edges in the corresponding planar face of $P$, one of them is always incident to the origin and the destination of the face. In other words, each face of $P$ that corresponds to an empty kite of $G$, is oriented so that its left and right paths contain exactly one vertex each. To prove that this is always the case, we first need to introduce additional notation. Let $f$ be a face of a plane graph $G$.  Let $v_1,\dots,v_h$ be the $h\geq3$ vertices that belong to the boundary of $f$, and let $N(v)$ be the set of neighbors of a vertex $v$ of $G$. The \emph{contraction} of $f$ is the  operation defined as follows. Add to $G$ a vertex $v_f$ and connect $v_f$ to the vertices in $N(v_1) \cup \dots \cup N(v_h)$. Then remove $v_1,\dots,v_h$ from $G$. The resulting (multi)graph is still planar. Moreover, the contraction operation can be performed so to preserve the planar embedding of $G$. Namely, the circular order of the edges incident to $v_f$ is the same circular order encountered walking along the boundary of $f$. See also Fig.~\ref{fi:contraction} for an illustration. The original graph $G$ can be obtained by applying the reverse operation, called the \emph{expansion} of $v_f$. Namely, vertices $v_1,\dots,v_h$ are reinserted along with their original edges and $v_f$ is removed from the graph.

\begin{figure}[t]
\centering
\subfigure[ Contraction of $f_K$]{\includegraphics[scale=0.5,page=1]{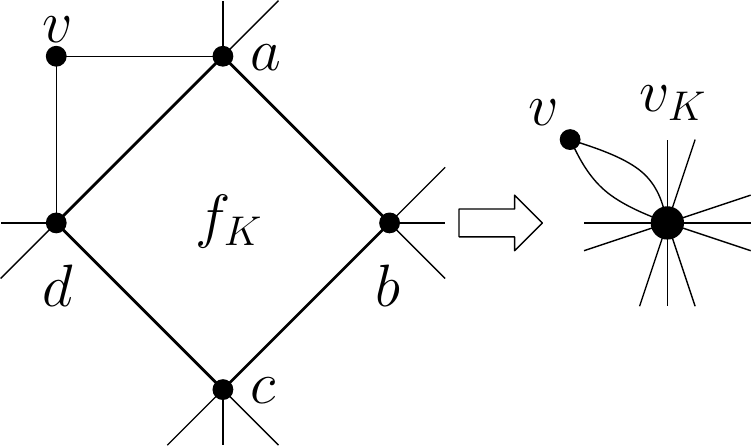}\label{fi:contraction}}\hfil
\subfigure[{\bf Case 1c}]{\includegraphics[scale=0.5,page=2]{figures/contraction}\label{fi:expansion-1}}\hfil
\subfigure[{\bf Case 2b}]{\includegraphics[scale=0.5,page=3]{figures/contraction}\label{fi:expansion-2}}\hfil
\subfigure[{\bf Case 3a}]{\includegraphics[scale=0.5,page=4]{figures/contraction}\label{fi:expansion-3}}
\caption{\small Illustration for the proof of Lemma~\ref{le:st-orientation}.}
\end{figure}

\begin{lemma}\label{le:st-orientation}
Let $G$ be an $n$-vertex IC-plane graph such that the end-vertices of each pair of crossing edges induce an empty kite. Let $C$ be the set of crossing edges in $G$, and consider the plane graph $P = G\setminus C$. Graph $P$ can be oriented to an $st$-graph such that each pair of crossing edges of $C$ has been removed from a face of $P$ whose left and right paths contain exactly one vertex each. This operation can be done in $O(n)$ time.
\end{lemma}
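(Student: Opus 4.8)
The plan is to construct the required bipolar orientation by working one crossing at a time, using the contraction operation to reduce each step to a known fact about $st$-graphs. First I would dispose of the degenerate case: if $C = \emptyset$ then $P = G$ is a plane graph which, after adding a dummy edge between any two vertices of the outer face if needed, is biconnected (by Property~\ref{pr:augmentation} we may assume $G$, hence $P$, is biconnected), and every biconnected plane graph admits an $st$-orientation for any prescribed edge $(s,t)$ on the outer face; there is nothing to enforce, so we are done. So assume $C \neq \emptyset$, and let $f_{K_1}, \dots, f_{K_m}$ be the faces of $P$ from which the $m$ crossing pairs were removed. Because each crossing induces an \emph{empty} kite and no two crossed edges share a vertex, these faces are pairwise distinct and pairwise vertex-disjoint on their boundaries; moreover each $f_{K_i}$ is a quadrilateral bounded by the four uncrossed kite edges.

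The key idea is to contract all of the kite faces simultaneously. Starting from $P$, contract each $f_{K_i}$ to a single vertex $w_i$, obtaining a plane (multi)graph $P'$; since the $f_{K_i}$ are vertex-disjoint these contractions do not interfere, and $P'$ remains biconnected (contracting a face of a biconnected plane graph keeps it biconnected, as a separating vertex of $P'$ other than a $w_i$ would separate $P$, and no $w_i$ can be a cutvertex since it inherits a cyclic rotation from the face boundary). Pick $s,t$ on the outer face of $P'$ — choosing them among the original vertices, which is possible since at most one kite is the outer face and we may route around it — and compute an $st$-orientation of $P'$; this exists and is computable in $O(n)$ time by the standard bipolar orientation algorithm~\cite{DBLP:journals/dcg/RosenstiehlT86,TamassiaTollis86}. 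Now expand each $w_i$ back to the quadrilateral $f_{K_i}$. In a bipolar orientation of $P'$, the vertex $w_i$ has its incoming edges consecutive and its outgoing edges consecutive in its rotation, which is exactly the cyclic order of the boundary of $f_{K_i}$; hence the boundary walk of $f_{K_i}$ splits into two contiguous arcs, one carrying the edges that entered $w_i$ and one carrying the edges that left $w_i$. The four boundary vertices of $f_{K_i}$ therefore fall into two nonempty contiguous groups, which we call the ``lower'' group (adjacent to in-edges) and the ``upper'' group (adjacent to out-edges). We orient the four boundary edges of $f_{K_i}$ so that the lower group is pushed toward a single local source $a_i$ and the upper group toward a single local sink $b_i$ on that four-cycle — concretely, orient the quadrilateral $a_i \to \cdot \to b_i$ along both of its two paths, choosing $a_i$ to be any vertex of the lower group and $b_i$ any vertex of the upper group that are diagonally opposite in the kite (this is where the \emph{empty kite} structure is used: the two diagonals are precisely the crossing pair, so $a_i$ and $b_i$ are the endpoints of one of the crossing edges). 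After doing this for every $i$ we obtain an orientation of all of $P$.

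It remains to check two things. First, the result is an $st$-graph: acyclicity holds because any directed cycle would have to enter and leave some expanded quadrilateral, but each quadrilateral is internally oriented from its unique local source $a_i$ to its unique local sink $b_i$ consistently with the in/out partition inherited from $P'$, so contracting the $w_i$'s back sends a hypothetical cycle to a cycle in $P'$, a contradiction; single-source/single-sink-ness is inherited from $P'$ since we introduced no new sources or sinks (every boundary vertex of a kite still has at least one in-edge or one out-edge from outside, except possibly $a_i$ and $b_i$, and those are arranged to be non-extremal unless $w_i$ itself was $s$ or $t$, which we excluded). Second, and this is the property we actually need, when we now reinsert the crossing pair into the face of $P$ that corresponds to $f_{K_i}$: by construction the boundary of that face is a directed quadrilateral with origin $a_i$ and destination $b_i$, its left path is $a_i \to (\text{one middle vertex}) \to b_i$ and its right path is $a_i \to (\text{other middle vertex}) \to b_i$, so each of the left and right paths contains exactly one vertex strictly between origin and destination — exactly as claimed — and the crossing edge $(a_i,b_i)$ is incident to both the origin and the destination of the face.

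The main obstacle I anticipate is the bookkeeping around the outer face and around the choice of $s$ and $t$: one must ensure that no kite face is forced to be the outer face of $P'$ in a way that makes a $w_i$ into $s$ or $t$ (which would spoil the ``exactly one vertex per side'' count for that kite), and one must verify that expanding the $w_i$'s does not accidentally create a new source or sink. Both are handled by the emptiness of the kites — which guarantees the four-cycle boundary is a genuine $4$-face with no chords or interior vertices — and by the freedom in Property~\ref{pr:augmentation} to modify the embedding so that some non-kite vertex lies on the outer boundary; but making the argument airtight, rather than the contraction/expansion mechanics itself, is the delicate part.
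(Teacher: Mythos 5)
Your overall strategy---contract every kite face, bipolar-orient the contracted multigraph, then expand each contracted vertex and orient the four reinserted edges so that the origin and destination of the face are diagonally opposite---is exactly the paper's. The differences lie in the two places where you replace the paper's explicit case analysis by assumptions, and both are genuine gaps. First, you assume that $s$ and $t$ can always be chosen among the original (non-contracted) vertices, so that no contracted vertex $w_i$ is the source or sink of $P'$. This can simply be impossible: if $G$ is a kite on $\{a,b,c,d\}$ plus a single vertex $u$ adjacent to $a$ and $b$, then $P'$ has only the two vertices $u$ and $w_1$, so one of $s,t$ must be a contracted vertex. The paper does not try to avoid this situation; it handles it in a dedicated final case, letting one kite vertex become the new source or sink upon expansion. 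Your parenthetical ``which we excluded'' therefore leaves a real case unproved, and the appeal to Property~\ref{pr:augmentation} to re-embed so that ``some non-kite vertex lies on the outer boundary'' does not fix it (one non-kite vertex on the outer face is not two, and in small instances there may be only one).

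Second, the assertion that one can always pick a diagonally opposite pair with the origin in your ``lower'' group and the destination in your ``upper'' group is precisely the content of the paper's Cases 1--3; it is not free. It does follow from the consecutiveness of in- and out-edges around $w_i$ \emph{provided} every kite vertex has at least one external edge, but you never argue this, and it can fail otherwise: a kite vertex of degree three in $G$ has no external edges in $P$, and then it must receive one incoming and one outgoing quadrilateral edge, which constrains the origin/destination choice. Concretely, take the kite $\{a,b,c,d\}$ together with $u,v$ and edges $ua$, $av$, $bv$, $uv$ (biconnected, empty kite). Following your prescription with $s=u$, $t=v$, the vertices $c$ and $d$ have no external edges, so they are forced onto the left/right paths, which makes the origin and destination the \emph{adjacent} pair $a,b$; no orientation of the quadrilateral satisfies both invariants, so your construction (and any construction starting from that orientation of $P'$) is stuck. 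The only way out in this instance is to make the contracted vertex the sink (or source) of $P'$---exactly the case you excluded---after which the paper's source/sink rule produces a valid diagonal origin/destination. So the choice of $s,t$ and the expansion step cannot be decoupled the way you do; either the case analysis must be carried out (including the $w_i\in\{s,t\}$ case), or one must justify an additional assumption that every kite vertex keeps an external edge in $P$. Your acyclicity argument via re-contraction, on the other hand, is fine and if anything cleaner than the paper's one-line justification.
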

\begin{proof}
Each pair of crossing edges of $G$ induces an empty kite $K$, and thus corresponds to a single face $f_K$ in $P$ having exactly four vertices on its boundary. As a first step, we contract each face $f_K$ of $P$ (corresponding to a kite $K$ in $G$). See also Fig.~\ref{fi:contraction} for an illustration. Notice that, since $G$ is an IC-plane graph, no two faces of $P$ share a vertex, and thus each vertex $v_K$ of $P_C$ corresponds to exactly one face of $P$. Hence, we can contract the faces following an arbitrary order. The resulting graph $P_C$ is a plane (multi)graph. Indeed, observe that if a face $f_K$ of $P$ shares an edge with a triangular face $f$, then $P_C$ will contain two parallel edges between $v_K$ and the vertex $v$ of $f$ not in $f_K$ (see also Fig.~\ref{fi:contraction}). 

As a second step, we orient $P_C$ to an $st$-graph (observe that parallel edges must receive the same orientation). In the third step, we expand one by one all the vertices $v_K$ corresponding to a contracted face $f_K$. After expanding a vertex $v_K$, we orient the four reinserted edges of the face $f_K$  maintaining the following invariants: {\bf I1.} The resulting graph has a single source and a single sink; {\bf I2.} The left and right paths of $f_K$ contain exactly one vertex each.

Invariants {\bf I1} and {\bf I2} imply that the graph after expanding $v_K$ is still an $st$-graph, as it has a single source and a single sink by {\bf I1} and is acyclic by {\bf I2}. 

Let $\{a,b,c,d\}$ be the four vertices belonging to the boundary of face $f_K$, encountered in this order clockwise around the boundary of the face. To maintain {\bf I2} we need to orient the edges of $f_K$ such that the origin and the destination of $f_K$ are two non-adjacent vertices, i.e., either $a$ and $c$ or $b$ and $d$. In order to maintain {\bf I1}, recall that the incoming edges of $v_K$ appear consecutive around it and so do the outgoing edges, unless $v_K$ is the source or the sink of the graph. Thus, if $v_K$ is neither the source nor the sink of the graph, then at most two of the reinserted vertices will have both incoming and outgoing edges incident to $v_K$, whereas at most three will have only incoming or only outgoing  edges incident to $v_K$. We distinguish the following three cases.

\medskip

{\noindent \bf Case 1.} No vertex of $f_K$ has both incoming and outgoing edges. See Fig.~\ref{fi:expansion-1} for an illustration. Then we consider the following subcases. {\bf Case 1a.} There are three vertices having only outgoing edges, say $a$, $b$ and $c$, which implies that all the edges of $d$ are incoming. In this case, we orient the edges of $f_K$ so that $b$ is the destination and $d$ is the origin of the face, which ensures {\bf I2}. All vertices has now both incoming and outgoing edges, and thus {\bf I1} is also maintained. 
{\bf Case 1b.} There are three vertices having only incoming edges, say $a$, $b$ and $c$, which implies that all the edges of $d$ are outgoing. In this case, the orientation that ensures {\bf I1} and {\bf I2} is the one where $d$ is the destination and $b$ is the origin of the face. 
{\bf Case 1c.} Two (consecutive) vertices only have incoming edges, say $c$ and $d$, and two (consecutive) vertices only have outgoing edges, $a$ and $b$.  Then to maintain {\bf I1} and {\bf I2} we let $a$ to be the destination and $c$ the origin of the face. This particular case is shown in Fig.~\ref{fi:expansion-1}. 

\medskip

{\noindent \bf Case 2.} Only one vertex of $f_K$ , say $a$, has both incoming and outgoing edges. Moreover, assume that the incoming edges of $a$ are between the edge $(a,d)$ and the outgoing edges of $a$, as in Fig.~\ref{fi:expansion-2}, since the case in which the incoming edges of $a$ are between the outgoing edges of $a$ and the edge $(a,b)$ is symmetric. We have two subcases. 
{\bf Case 2a.} The other three vertices only have incoming (resp., outgoing) edges. In this case,  we orient the edges of $f_K$ so that $c$ (resp., $a$) is the origin and $a$ (resp., $c$) is the destination. This choice ensures both {\bf I1} and {\bf I2}. 
{\bf Case 2b.} Two vertices only have incoming edges and the other vertex only have outgoing edges. Due to the bipolar orientation of $v_K$, the two vertices having only incoming edges are $c$ and $d$. Then we choose $d$ as origin of the face and $b$ as destination, which ensures both {\bf I1} and {\bf I2}. This particular case is shown in Fig.~\ref{fi:expansion-2}. The same orientation works also if two vertices only have outgoing edges ($b$ and $c$) and the other vertex only have incoming edges ($d$).

\medskip

{\noindent \bf Case 3.} Two vertices of $f_K$  have both incoming and outgoing edges. We have two subcases. {\bf Case 3a.} Suppose first that these two vertices, say $a$ and $b$, are adjacent in $P$.   Moreover, assume that the incoming edges of $a$ are between the edge $(a,d)$ and the outgoing edges of $a$, as in Fig.~\ref{fi:expansion-3}, since the other case is symmetric. This implies that the incoming edges of $b$ are between the outgoing edges of $b$ and the edge $(b,c)$. Moreover, $c$ and $d$ only have incoming edges. Then, if we let $a$ to be the destination of the face and $c$ the origin, {\bf I1} and {\bf I2} are maintained. 
{\bf Case 3b.} Suppose now that the two vertices, say $a$ and $c$, are not adjacent in $P$. Moreover, assume that the incoming edges of $a$ are between the edge $(a,d)$ and the outgoing edges of $a$, as in Fig.~\ref{fi:expansion-3}, since the case in which the incoming edges of $a$ are between the outgoing edges of $a$ and the edge $(a,b)$ is symmetric. This implies that the incoming edges of $c$ are between the outgoing edges of $c$ and the edge $(c,d)$. Moreover, $b$ and $d$ only have outgoing and incoming edges, respectively. Then, if we orient the edges of $f_K$ so that $b$ is the destination of the face and $d$ the origin, then {\bf I1} and {\bf I2} are ensured.

\medskip

Finally, suppose that $v_K$ is the source (resp., sink) of the graph. Then all the vertices of $f_K$ have either no incident oriented edges, or all outgoing (resp., incoming) edges. Also, at least one of them has at least one outgoing (resp., incoming) edge, say $a$. We orient the edges of $f_K$ so that $a$ is the destination (resp., origin) and $c$ the origin (resp., destination) of $f_K$. 

\medskip

The described algorithm works in $O(n)$ time. Namely, the graph $P_C$ can be constructed and oriented to an $st$-graph in $O(n)$ time (see, e.g.,~\cite{Even1976339}). Furthermore, orienting the edges of an expanded face requires first to analyze the orientation of the edges incident to each vertex of the face, and then to orient the edges of the face. Since the faces are vertex-disjoint, this costs at most $\sum_{\forall v \in P}deg(v)=2m_P$, where $m_P$ is the number of edges of $P$, which is $O(n)$.
\end{proof}

Let $f_K$ be a face of the plane $st$-graph $P$ which corresponds to an (empty) kite $K$ in $G$. By Lemma~\ref{le:st-orientation}, $f_K$ is such that its left and right path have both length two. This implies that one of the two crossing edges of $K$ is incident to the origin and to the destination of $f_K$, whereas the other one is incident to the two vertices belonging one to the left and one to the right path of $f_K$. Let $P^+$ be the biconnected plane graph obtained from $P$ by reinserting, for each pair of crossing edges of $G$, the edge incident to the origin and to the destination of the corresponding face in $P$. Furthermore, let $\Gamma'$ be a strong visibility drawing of $P^+$ in $O(n^2)$ area, which can be computed in $O(n)$ time (see, e.g.,~\cite{TamassiaTollis86}).

\begin{figure}[t]
\centering
\subfigure[$f_k$]{\includegraphics[scale=0.8,page=3]{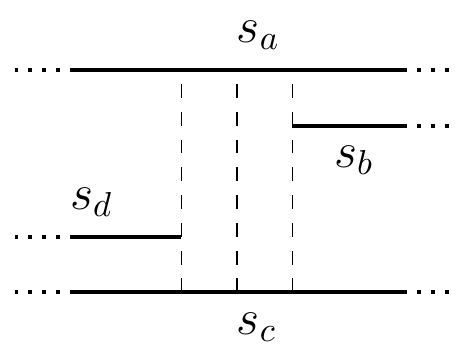}\label{fi:face}}\hfil
\subfigure[$\Gamma'$]{\includegraphics[scale=0.8,page=1]{figures/visrep}\label{fi:barvisrep}}\hfil
\subfigure[$\Gamma$]{\includegraphics[scale=0.8,page=2]{figures/visrep}\label{fi:lvisrep}}\hfil
\caption{\small Illustration for the proof of Lemma~\ref{le:drawing}.}
\end{figure}

\begin{lemma}\label{le:drawing}
$\Gamma'$ can be transformed into a L-visibility drawing $\Gamma$ of $G$ that requires $O(n^2)$ area. This operation can be done in $O(n)$ time.
\end{lemma}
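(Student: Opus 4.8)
The plan is to obtain $\Gamma$ from $\Gamma'$ by turning every bar of $\Gamma'$ into an L-shape and by a local modification around each face $f_K$ of $P$ that corresponds to an empty kite $K$ of $G$, which reinserts the crossing edge of $K$ that is missing in $P^+$ as a horizontal visibility. Recall that $\Gamma'$ lies on an $O(n)\times O(n)$ integer grid, its vertices are pairwise disjoint horizontal bars, and its edges are vertical visibility segments. First I would refine this grid by a suitable constant factor (which keeps the area in $O(n^2)$) so as to create a thin free strip around each bar; then I would interpret each bar as the horizontal part of an L-shape and attach to it a short vertical part at one of its endpoints, routed in this free strip so that no new crossing with an edge or with another L-shape is created. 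After this step, $\Gamma'$ is a valid (strong) L-visibility drawing of $P^+$ in which every vertex is a shape of \shapes.

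It remains to reinsert, for each empty kite $K$, the crossing edge not in $P^+$. Let $f_K$ be the face of $P$ corresponding to $K$. By Lemma~\ref{le:st-orientation} the left and right paths of $f_K$ have length two, so $f_K$ has origin $s_K$, destination $t_K$, exactly one internal vertex $u$ on its left path and exactly one internal vertex $w$ on its right path; the edge of $K$ present in $P^+$ is $(s_K,t_K)$ and the one to reinsert is $(u,w)$. In $\Gamma'$ the edge $(s_K,t_K)$ is a vertical segment $e_K$; together with the parts of the bars of $s_K$ and $t_K$ facing $f_K$ and the left and right paths of $f_K$, it bounds the region $R_K$ of $\Gamma'$ formerly occupied by the face $f_K$ of $P$. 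Since $f_K$ is a face of $P$, the interior of $R_K$ is free of vertices and edges of $P$, and $e_K$ splits it into a left portion having the bar of $u$ on its boundary and a right portion having the bar of $w$ on its boundary (see Fig.~\ref{fi:face}). I would then route $(u,w)$ inside $R_K$ as a horizontal visibility $h_K$ at a height $y^\ast$ strictly between $y(s_K)$ and $y(t_K)$: extend the vertical part of the L-shape of $u$ from the endpoint of the bar of $u$ that faces $R_K$ until it reaches height $y^\ast$, do the same for $w$, and let $h_K$ be the horizontal segment at height $y^\ast$ joining the two vertical parts. By construction $h_K$ lies in $R_K$, crosses $e_K$ once at a right angle, and crosses nothing else; compare Fig.~\ref{fi:barvisrep} and Fig.~\ref{fi:lvisrep}.

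Correctness then follows by collecting the invariants above: every vertex of $\Gamma$ is a shape of \shapes and no two L-shapes intersect, because distinct kites are vertex-disjoint (a consequence of $G$ being IC-plane, as already used in the proof of Lemma~\ref{le:st-orientation}) and the corresponding regions $R_K$ are internally disjoint; every edge of $P^+$ is a vertical visibility and every reinserted edge is a horizontal visibility, so all crossings of $\Gamma$ occur between a horizontal and a vertical visibility and form right angles; each reinserted edge $(u,w)$ crosses exactly $e_K$, so $\Gamma$ reproduces the crossings of $G$, and since $G$ is IC-plane no visibility is crossed more than once and no two crossed visibilities share an L-shape. For the complexity, $\Gamma'$ has $O(n^2)$ area and is computed in $O(n)$ time; the grid refinement is by a constant factor, the number of kites is $O(n)$ and they are vertex-disjoint, so all the local modifications are performed in $O(n)$ time in total and increase the area only by a constant factor, hence $\Gamma$ still has $O(n^2)$ area.

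The main obstacle is the local reinsertion step: one has to choose the height $y^\ast$ and the endpoints of the bars of $u$ and $w$ at which their vertical parts are attached so that (i) the extended vertical parts stay inside $R_K$ and avoid the four path edges of $f_K$ (in particular those incident to $u$ and to $w$), (ii) the segment $h_K$ sees both vertical parts and crosses only $e_K$, and (iii) these choices are compatible with the short vertical parts already attached to all other vertices and with the grid refinement. This calls for a routine but careful case analysis of the position of $e_K$, of the attachment points of the four path edges of $f_K$ on the bars of $s_K,t_K,u,w$, and of the relative heights of $u$ and $w$; it is precisely here that the property ensured by Lemma~\ref{le:st-orientation} — that $e_K=(s_K,t_K)$ is vertical and separates $u$ from $w$ within $R_K$ — is exploited.
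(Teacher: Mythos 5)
Your construction is essentially the paper's own proof: keep $\Gamma'$, note via Lemma~\ref{le:st-orientation} that the crossing edge present in $P^+$ joins the origin and destination of $f_K$ and is drawn as a vertical visibility, and reinsert the other crossing edge as a horizontal visibility between vertical stubs attached to the bars of the two middle vertices inside the (empty) region of $f_K$, with the same $O(n)$-time, $O(n^2)$-area accounting. The obstacle you defer to a case analysis is resolved in the paper by a concrete trick: first extend the two bars by $0.25$ units into the face (so the stubs attach at fresh points with no incident edge) and place the horizontal visibility at the midpoint height between the two bars (or $0.5$ above them if they are level), after which a final scaling by $4$ restores integer coordinates; also, the paper does not bother turning every bar into a nondegenerate L-shape, only the two bars per kite that receive the horizontal visibility.
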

\begin{proof}
Consider a face $f_K$ of $P$ corresponding to an empty kite $K$ in $G$. Let $\{a,b,c,d\}$ be the four vertices of $f_K$, encountered in this order clockwise around the boundary of the face. See also Fig.~\ref{fi:face}. Without loss of generality, let $a$ be the destination of the face. Then Lemma~\ref{le:st-orientation} implies that $c$ is the origin of $f_K$, and thus the edge reinserted in $P^+$ for this face is $(a,c)$. In other words, $f_K$ is split in two faces in $P^+$, and these two faces share the edge $(a,c)$. Consider the subdrawing of $\Gamma'$ induced by the four vertices $\{a,b,c,d\}$ . An illustration is also shown in Fig.~\ref{fi:barvisrep}. Let $s_v$ be the bar representing a vertex $v$ in $\Gamma'$. Either $s_d$ and $s_b$ are drawn at the same $y$-coordinate, or, one is above the other. Also, the two bars do not overlap as there is no edge between $b$ and $d$ in $P^+$ (and $\Gamma'$ is a strong visibility drawing). Between the two bars there is actually (at least) one unit gap needed to draw the visibility from $s_c$ to $s_a$. Moreover, $s_c$ is below both $s_d$ and $s_b$, while $s_a$ is above both of them. In any case, we first extend $s_d$ (resp., $s_b$) by 0.25 units to the right (resp., left). Next, if $s_b$ and $s_d$ have the same $y$-coordinate, then it suffice to draw two vertical bars $s'_d$ and $s'_b$, such that the bottomost end-point of $s'_d$ (resp., $s'_b$) coincides with the rightmost end-point of $s_d$ (resp., leftmost end-point of $s_b$), and such that the other end-point is 0.5 units above it.  If one is above the other, say $s_b$ is above $s_d$, and the difference in terms of $y$-coordinates between the two bars is $k\geq1$ units, then we draw two vertical bars $s'_d$ and $s'_b$, such that the end-point of $s'_d$ (resp., $s'_b$) coincides with the rightmost end-point of $s_d$ (resp., leftmost end-point of $s_b$), and such that the other end-point is $k/2$ units above it (resp., below it). In both cases, the two resulting L-shapes see each other through a horizontal visibility segment. See also Fig.~\ref{fi:lvisrep}. Since every vertex is adjacent to at most one crossed edge  we have that the final drawing $\Gamma$ is a L-visibility drawing of $G$. Since $\Gamma'$ contains $O(n)$ segments which have to be transformed to L-shapes, $\Gamma$ is computed in $O(n)$ time. Finally, in order to restore integer coordinates, we scale by a factor 4 the grid of $\Gamma$, which thus takes $O(n^2)$ area.
\end{proof}

Lemmas~\ref{le:st-orientation} and~\ref{le:drawing} imply Theorem~\ref{th:main}. To prove Corollary~\ref{co:main}, consider a visibility drawing $\Gamma$ of an $n$-vertex IC-plane graph $G$. By Theorem~\ref{th:main}, $\Gamma$ can be computed in $O(n)$ time and fits on a grid of $O(n^2)$ size. Let $\ell$ be an L-shape of $\Gamma$. The \emph{representative point} $r$ of $\ell$ is defined as follows. If both the horizontal and the vertical segments of $\ell$ have non-zero length, then $r$ is the point where they touch. Otherwise, $r$ is the midpoint of the segment of $\ell$  having non-zero length. Replace each vertical visibility segment with a polyline as follows. Let $s$ be a visibility segment connecting the two L-shapes $\ell_1$ and $\ell_2$. Let $r_1$ and $r_2$ be the representative points of $\ell_1$ and $\ell_2$, respectively.  Also, let $p_1$ and $p_2$ be the points that $s$ shares with $\ell_1$ and $\ell_2$, respectively. Suppose that $r_1$ is below $r_2$ (and thus $p_1$ is below $p_2$). Replace $s$ with the polyline starting at $r_1$, bending 0.25 grid units above $p_1$, bending again 0.25 units below $p_2$, and ending at $r_2$. With a symmetric operation we can also replace each horizontal visibility segment. Finally, replace each L-shape with its representative point. The resulting drawing is an IC-plane drawing of $G$ where edges are polylines with (at most) two bends that cross at right-angles. Finally, scaling by a factor 4 the grid of the drawing we restore integer coordinates. Fig.~\ref{fi:example-3} shows a RAC drawing computed from the L-visibility drawing in Fig.~\ref{fi:example-2}.

\section{Conclusions and Open Problems}\label{se:conclusions}
We have proved that every IC-plane graph $G$ has a L-visibility drawing which can be computed in linear time. As a corollary, our result implies that $G$ has a RAC drawing in quadratic area and at most two bends per edge which can also be computed in linear time. We conclude the paper with two open problems: $(i)$ Does every $1$-planar graph admit a visibility drawing where the shape associated with each vertex is either an L-shape, a T-shape, or a +-shape? $(ii)$ Does every IC-plane graph admit a RAC drawing with at most one bend per edge in polynomial area?





{\small \bibliography{lshapes}}
\bibliographystyle{abbrv}

\end{document}